\documentclass[aps,prx,showpacs,twocolumn,superscriptaddress,longbibliography]{revtex4-2}

\usepackage{graphicx}
\usepackage{amssymb}
\usepackage{amsmath}
\usepackage{array}
\usepackage{color}
\usepackage{xcolor}
\usepackage{hyperref}
\usepackage{physics}
\usepackage{cancel}
\usepackage{bm}
\usepackage[english]{babel}
\usepackage{times}
\usepackage{amsfonts}
\usepackage{epsfig}
\usepackage{mathrsfs}
\usepackage{color,soul}
\usepackage{bbold}
\usepackage{textcomp}
\usepackage{dsfont}
\usepackage{braket}
\usepackage{empheq}
\usepackage{amsthm}
\usepackage{eurosym}
\usepackage{siunitx}
\AtBeginDocument{\RenewCommandCopy\qty\SI}
\usepackage{enumitem}
\usepackage{comment}
\usepackage{multirow}
\hypersetup{colorlinks=true,linkcolor=blue,citecolor=cyan}
\usepackage{orcidlink}
\usepackage[percent]{overpic}

\def\tr{\operatorname{Tr}}

\def\IP{\mathcal{IP}}

\newcommand{\id}{\mathds{1}}

\newcommand{\ee}{\operatorname{e}}             
\newcommand{\ii}{\mathrm{i}}             

\renewcommand{\braket}[2]{\left\langle #1 \middle| #2 \right\rangle}

\newcommand{\bbra}[1]{{ \langle \! \langle{#1}\vert }}
\newcommand{\kket}[1]{{ \vert {#1}  \rangle \!  \rangle}}

\newcommand{\PProj}[1]{\left[\left[#1\right]\right]}

\newcommand{\cH}{\mathcal{H}}
\newcommand{\cL}{\mathcal{L}}

\newcommand{\cE}{\mathcal{E}}
\newcommand{\cD}{\mathcal{D}}
\newcommand{\cI}{\mathcal{I}}
\newcommand{\cW}{\mathcal{W}}

\newcommand{\cM}{\mathcal{M}}
\newcommand{\D}{\Delta}

\newcommand{\rel}{\mathrm{rel}}
\newcommand{\inc}{\mathrm{inc}}

\newcommand{\NM}{\mathrm{NM}}
\newcommand{\M}{\mathrm{M}}

\newtheorem{definition}{Definition}
\newtheorem{proposition}{Proposition}

\begin{document}

\title{Multi-time advantage for certifying quantum coherence}

\author{Karol Sajnok\orcidlink{0009-0004-5899-8923}}
\email{ksajnok@cft.edu.pl}
\affiliation{Center for Theoretical Physics, Polish Academy of Sciences, Aleja Lotników 32/46, 02-668 Warsaw, Poland}
\affiliation{Nordita, Stockholm University and KTH Royal Institute of Technology, Hannes Alfvéns väg 12, 106 91 Stockholm, Sweden}
\affiliation{Institute of Theoretical Physics, University of Warsaw, ul. Pasteura 5, 02-093 Warsaw, Poland}

\author{Fabio Costa\orcidlink{0000-0002-6547-6005}}
\email{fabio.costa@su.se}
\affiliation{Nordita, Stockholm University and KTH Royal Institute of Technology, Hannes Alfvéns väg 12, 106 91 Stockholm, Sweden}
\affiliation{School of Mathematics and Physics, The University of Queensland, St Lucia, QLD 4072, Australia}

\date{\today}

\begin{abstract}
Non-classical dynamics in open quantum systems can be hidden in temporal correlations that are invisible to standard state- and channel-based tests. We define coherence for multi-time quantum processes and show that access to intermediate times can reveal coherence absent from all corresponding pairwise channels. Incoherent processes are fixed points of dephasing on all input and output spaces, and their coherence is quantified by relative entropy. As a proof of principle, we use an Ising model in which all three pairwise channels are incoherent while the full three-time process remains coherent. We observe this multi-time coherence on a superconducting quantum processor using a coherence witness and full process tomography.
\end{abstract}

\maketitle

\section{Introduction}

A central question in quantum experiments is whether the observed dynamics admits a classical description. Bell tests provide a strong test of non-classicality, but require demanding experimental conditions \cite{brunner2014bell,hensen2015loophole}. Quantum coherence offers a simpler, basis-dependent alternative and has a well-developed resource theory for states \cite{baumgratz2014quantifying, streltsov2017colloquium, winter2016operational} and channels \cite{Saxena2020}. These descriptions, however, can miss non-classical structure contained in temporal correlations generated by system--environment interactions.

Standard open-system dynamics is usually described by maps from a fixed initial time to later times \cite{breuer2002theory, Vacchini2024}. This description does not capture interventions at intermediate times. Instead, for a system repeatedly interacting with an environment, a multi-time process assigns probabilities to arbitrary sequences of local operations \cite{Chiribella2008, Chiribella2009, Pollock2018, pollock2018non}. Such a process contains the reduced dynamical maps as particular restrictions, but it can also retain temporal correlations mediated by inaccessible degrees of freedom. Multi-time processes, also known as process tensors, process matrices, or combs, have become in recent years a central tool in the analysis of quantum non-Markovianity \cite{Ringbauer2015,Strathearn2018, Shrapnel2018,Luchnikov2019,Berk2021,Milz2021} and quantum causal structures \cite{costa2016,Giarmatzi2018, Barrett2019,Goswami2020,Barrett2020,Costa2025}.

Can a multi-time test certify coherence when the usual state- and channel-based tests do not? We show that it can: even when every pairwise channel between the same observation times is incoherent, memory can correlate coherent dynamical histories so that the full multi-time process retains coherence, while the corresponding Markovian process does not. Our framework is illustrated schematically in Fig.~\ref{fig:scheme}.

\begin{figure}[t]
    \centering
    \includegraphics[width=\linewidth]{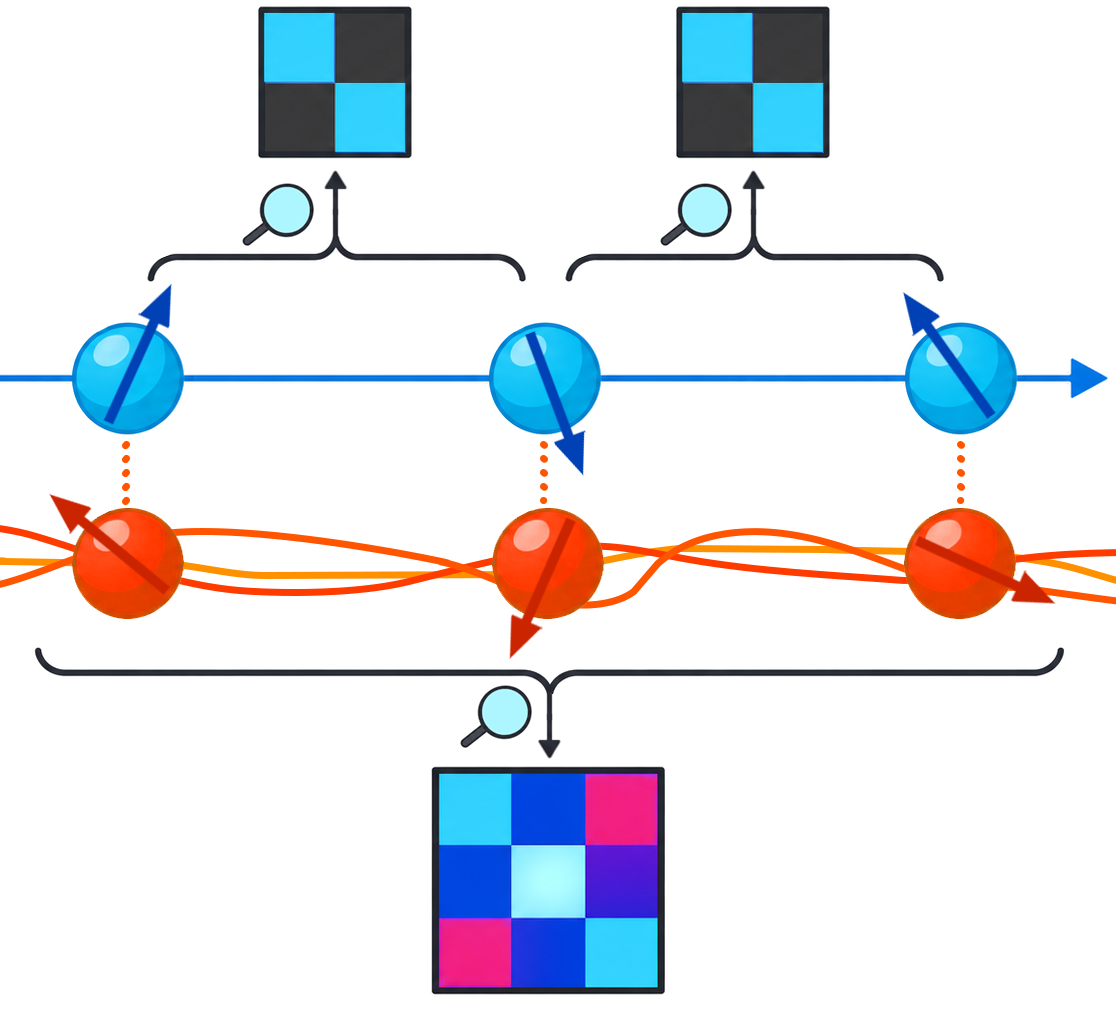}
    \caption{Schematic illustration of multi-time coherence. A persistent memory correlates the system evolution across successive time intervals, so that coherence can be present in the full multi-time process even when the corresponding pairwise channels are incoherent.}
    \label{fig:scheme}
\end{figure}

The framework applies to arbitrary valid processes, including processes with indefinite causal order \cite{Chiribella2013, OCB_2012,Costa2026}. Coherence and indefinite causal order are logically distinct resources: a fixed-order process may be coherent, while an indefinite-order process need not be \cite{Baumeler2014PRA, Baumeler2017}. See also Ref.~\cite{Benhaj2026} for a recent analysis of dephasing in quantum-switch scenarios.

The paper is organized as follows. Section~\ref{sec:background} reviews the process formalism and the state/channel notion of coherence. Section~\ref{sec:process-coherence} defines incoherent multi-time processes, the relative-entropy coherence measure, and coherence witnesses. Section~\ref{sec:ising-model-2} presents the proof of principle through a simple Ising model with two environmental qubits and the comparison with the corresponding pairwise channels. Subsidiary calculations are collected in the Appendix: the Ising model with single environment qubit (\ref{app:ising-model-1}), the dipole--dipole model (\ref{app:dipole-model}) and the details of experimental runs of the main model on the IBM superconductors QPU (\ref{app:ising2-experiment}).

\section{Preliminaries}
\label{sec:background}

Here we review the standard notation and basic definitions of multi-time quantum processes and coherence theory.

\subsection{Multi-time processes}

Consider laboratories $A,B,C,\ldots$, each described by a quantum instrument $\{\cM_a^A\}_a$, $\{\cM_b^B\}_b,\ldots$, where each element is a completely positive (CP) trace-non-increasing map from an input space ($A_I, B_I, \dots$) to an output space ($A_O, B_O, \dots$), either of which may be trivial, and the sum over outcomes is trace preserving (TP).

We first fix the Choi--Jamiołkowski convention used throughout the paper \cite{Jamiolkowski1972,Choi1975}. For a CP map $\cE:\cL(\cH_I)\to\cL(\cH_O)$,
\begin{align}
    J_{\cE}^{IO} &=\sum_{i,j}\ketbra{i}{j}^{I}\otimes\cE(\ketbra{i}{j})^{O}, \label{eq:choi-convention1}\\
    \cE(\rho) &=\tr_I\!\left[(\rho^T\otimes\id_O)J_{\cE}^{IO}\right]. \label{eq:choi-convention2}
\end{align}
The Choi operator in Eq.~\eqref{eq:choi-convention1} and the transpose ${}^T$ in Eq.~\eqref{eq:choi-convention2} are always defined with respect to the same reference basis used below to define dephasing. We denote by $M_a^A$ the Choi operator of the CP map $\cM_a^A$, and analogously for the other laboratories. When composing Choi operators we use the link product \cite{Chiribella2009}. If $M^{XY}$ and $N^{YZ}$ share the system $Y$,
\begin{align}
    M*N=\tr_Y\!\left[(M^{T_Y}\otimes\id_Z)(\id_X\otimes N)\right]. \label{eq:link-product}
\end{align}
This is the Choi-space representation of composition.

In the Choi representation, and following the conventions from Ref.~\cite{Costa2026}, the probability of a sequence of outcomes is given by the generalized Born rule
\begin{align}
    p(a,b,\ldots) = \tr\!\left[ W\left(M_a^A\otimes M_b^B \otimes\cdots\right)^T \right],  \label{eq:generalized-born}
\end{align}
where $W\ge0$ is the process operator, which satisfies the usual linear normalization constraints ensuring unit probability for arbitrary local CPTP maps \cite{Chiribella2009,OCB_2012,araujo2015witnessing,Wechs_2019}, while additional constraints enforce a particular causal order between the labs. For a causally ordered tripartite process $W^{A_OB_IB_OC_I}$ with $A\prec B\prec C$, for example,
\begin{align}
    \tr_{C_I}W^{A_OB_IB_OC_I} &= \id_{B_O}\otimes W^{A_OB_I},\\
    \tr_{B_I}W^{A_OB_I}&=\id_{A_O}.
\end{align}
Here $A$ has only an output, $C$ only an input, and the single intermediate laboratory $B$ constitutes the slot in the conventional terminology.

The same process can also be represented as an auxiliary channel. Group all output spaces into $O$ and all input spaces into $I$, and reorder the tensor factors of $W$ as $W^{OI}$. The associated map $\cW:\cL(\cH_O)\rightarrow\cL(\cH_I)$ acts as
\begin{align}
    \cW(\rho^O) = \tr_O\!\left[ \left((\rho^O)^T\otimes\id^I\right)W^{OI} \right],   \label{eq:process-channel-representation}
\end{align}
which is trace preserving and, for causally ordered processes, coincides with the description of a channel with memory originally introduced in Ref.~\cite{Kretschmann2005}. Up to the fixed reordering of tensor factors, the process operator is the Choi operator of the auxiliary channel $\cW$. This representation of the full process should not be confused with a reduced physical dynamical channel connecting two particular times. For $W^{A_OB_IB_OC_I}$ it maps independently chosen inputs on $A_O B_O$ to the joint outputs on $B_I C_I$.

The classical analogue is obtained by replacing each quantum instrument by conditional probabilities $P_A(a,o_A|i_A)$, $P_B(b,o_B|i_B)$, and the process by a classical stochastic kernel $P_W(\bm i|\bm o)$. The generalized Born rule becomes \cite{Baumeler2017}
\begin{align}
    p(a,b,\ldots) = \sum_{\bm i,\bm o} P_W(\bm i|\bm o) P_A(a,o_A|i_A) P_B(b,o_B|i_B)\cdots . \label{eq:classical-born}
\end{align}
This classical representation is the operational reference against which coherence of a multi-time process will be defined.

\subsection{Coherence of states and channels}

Let $\cH$ be a $d$-dimensional Hilbert space with a fixed orthonormal reference basis $\{\ket{i}\}_{i=0}^{d-1}$. The completely dephasing channel in this basis is
\begin{align}
    \D(\rho)=\sum_{i=0}^{d-1}\ketbra{i}{i}\rho\ketbra{i}{i}.
\end{align}
The set of incoherent states is the fixed-point set $\cI=\{\delta:\D(\delta)=\delta\}$. The relative entropy of coherence is \cite{baumgratz2014quantifying}
\begin{align}
    C_{\rel}(\rho) = S(\rho\Vert\D(\rho)) = S(\D(\rho))-S(\rho), \label{eq:state-crel}
\end{align}
where $S(\rho)=-\tr(\rho\ln\rho)$ is the von Neumann entropy. For a trace-preserving map $\tr J_{\cE}=d_I$, so $\rho_{\cE}=\frac{J_{\cE}}{d_I}$ is a normalized Choi state.

Coherence has been extended from states to quantum operations through several inequivalent resource-theoretic constructions, including coherence capacities and simulation costs \cite{BenDana2017}, coherence-detection power \cite{Theurer2019}, and dynamical coherence of channels \cite{Saxena2020}. Here we adopt the free channel set of Ref.~\cite{Saxena2020}, whose elements neither detect nor generate coherence and satisfy
\begin{align}
    \cE=\D_O\circ\cE\circ\D_I. \label{eq:incoherent-channel}
\end{align}
Channels of this type admit a classical interpretation in the reference basis, in the sense that they are fully characterised by the stochastic map $P_{\mathcal{E}}(i|j):=\expval{\mathcal{E}(\ketbra{j})}{i}$, $\mathcal{E}(\rho) = \sum_{ij} P_{\mathcal{E}}(i|j) \ketbra{i} \expval{\rho}{j}$. In the Choi representation, Eq.~\eqref{eq:incoherent-channel} is equivalent to
\begin{align}
    J_{\cE}=(\D_I\otimes\D_O)(J_{\cE}). \label{eq:incoherent-channel-choi}
\end{align}
A channel is incoherent if and only if its Choi state is incoherent in the corresponding product reference basis.

For our purposes, we quantify this resource by the relative entropy of coherence of the normalized Choi operator,
\begin{align}
    C_{\rel}(\cE) \coloneqq C_{\rel}(\rho_{\cE}) =S(\D(\rho_{\cE}))-S(\rho_{\cE}), \label{eq:channel-coherence}
\end{align}
where $\D=\D_I\otimes\D_O$ acts on both the input and output Choi spaces. This choice differs from the channel-divergence measures considered in Ref.~\cite{Saxena2020}, however admits an immediate extension to general multi-time processes. In particular, a completely dephasing channel is incoherent in this sense. An ordinary channel can equivalently be regarded as a bipartite process (where one party has trivial input and the second a trivial output). We will use this identification below when comparing ordinary channels with genuine multi-time processes.

Previous work has related coherence to the classicality of multi-time measurement statistics \cite{Smirne2019,Strasberg2019,Milz2020}. Here we instead treat coherence as a basis-dependent resource of the full process operator, extending Ref.~\cite{Saxena2020}, and distinguish process coherence from Kolmogorov classicality associated with a fixed probing scheme.

\section{Coherence of multi-time processes} \label{sec:process-coherence}

A process is incoherent if dephasing all local input and output systems does not change any statistics once the local operations are replaced by their corresponding dephased operations. Equivalently, it admits the classical stochastic description of Eq.~\eqref{eq:classical-born}.

\begin{definition}[Process dephasing]
For every open process leg $X$ choose a reference basis and let $\D_X$ denote complete dephasing on that leg. The total process dephasing is
\begin{align}
    \cD(W)=\left(\bigotimes_X\D_X\right)(W). \label{eq:process-dephasing}
\end{align}
\end{definition}
Because every $\D_X$ is CPTP, unital and self-adjoint, $\cD$ preserves positivity and the recursive process normalization constraints. In particular, it maps a valid causally ordered process to a valid process with the same causal order.

\begin{definition}[Incoherent process] \label{def:incoherentprocess}
A finite-dimensional process $W$ is incoherent in the chosen reference basis if
\begin{align}
    W=\cD(W). \label{eq:incoherent-process}
\end{align}
The set of incoherent processes is denoted as $\IP$.
\end{definition}

The operational characterization stated above follows directly from the generalized Born rule. Let $\cD_A=\cD_{A_I}\otimes\cD_{A_O}$ denote dephasing of the Choi operator associated with laboratory $A$, and analogously for the other laboratories. Since dephasing is self-adjoint and, when defined in the same basis as the Choi representation, commutes with transposition,
\begin{align}
    &\tr\!\left[  W  \left( \cD_A(M_a^A)\otimes \cD_B(M_b^B)\otimes\cdots  \right)^T \right] \nonumber\\
    &\qquad= \tr\!\left[ \cD(W) \left(  M_a^A\otimes M_b^B\otimes\cdots \right)^T \right]. \label{eq:dephasing-born-rule}
\end{align}
If $W=\cD(W)$, replacing every local CP map by its dephased counterpart leaves all probabilities unchanged. Conversely, if these probabilities are unchanged for all local CP maps, informational completeness of their Choi operators implies $W=\cD(W)$.

Equivalently, after fixing the process normalization convention, an incoherent process has the diagonal form
\begin{align}
    W_{\inc} = \sum_{\bm i,\bm o} P_W(\bm i|\bm o) \bigotimes_X\ketbra{x_X}{x_X}^{X}, \label{eq:classical-process}
\end{align}
where the diagonal coefficients are the conditional probabilities of a classical multi-time stochastic process and satisfy the corresponding normalization constraints.

Our notion is related to, but more restrictive than, the $K$-classicality of Ref.~\cite{Milz2020}, where classicality is defined through Kolmogorov consistency of statistics obtained from sequential projective measurements in a fixed basis. In the comb representation, this requires that inserting a dephasing channel at unobserved times is indistinguishable from the identity under the remaining fixed-basis projective probes. A classical comb may still contain an off-diagonal contribution that is invisible to these probes, while $W=\mathcal D(W)$ requires the complete process operator itself to be diagonal --- a process may be classical in the sense of Ref.~\cite{Milz2020} and possess nonzero process coherence as defined here. The two notions quantify different resources: statistical nonclassicality under a fixed probing scheme and coherence of the complete multi-time process, respectively.

\begin{definition}[Relative entropy of process coherence]
For $\omega=W/\tr W$, the relative entropy coherence measure is
\begin{align}
    \mathcal{C}_{\rel}(W) \coloneqq S\!\left(\omega\Vert\cD(\omega)\right) = S(\cD(\omega))-S(\omega). \label{eq:process-crel}
\end{align}
\end{definition}

The second equality follows because $\cD$ is an idempotent dephasing map. The definition is also the relative-entropy coherence of the channel representation, Eq.~\eqref{eq:process-channel-representation}, since its normalized Choi operator is $\omega$ up to a permutation of tensor factors. In particular, if $W_{\cE}$ is the bipartite process associated with an ordinary channel $\cE$, then
\begin{align}
    \mathcal{C}_{\rel}(W_{\cE}) = C_{\rel}(\cE). \label{eq:one-slot-process-channel}
\end{align}

\begin{proposition}[Convexity]
\label{prop:convexity}
For valid processes of the same type, the set $\IP$ is convex and the relative entropy of process coherence is convex, i.e. for $0\le p\le1$
\begin{align}
    \mathcal{C}_{\rel}\!\left(pW_1+(1-p)W_2\right) \le p\,\mathcal{C}_{\rel}(W_1) +(1-p)\,\mathcal{C}_{\rel}(W_2). \label{eq:process-coherence-convexity}
\end{align}
\end{proposition}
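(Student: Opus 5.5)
Convexity of $\IP$ follows from linearity of $\cD$. If $W_1,W_2\in\IP$, then
\begin{align}
    \cD\!\left(pW_1+(1-p)W_2\right)=p\,\cD(W_1)+(1-p)\,\cD(W_2)=pW_1+(1-p)W_2 .
\end{align}
The mixture $pW_1+(1-p)W_2$ is also a valid process of the same type. Positivity is preserved under convex combinations, and the normalization and causal-order constraints are affine conditions with fixed right-hand sides, so they survive convex mixing. Hence $\IP$ is convex.

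For the measure, the key point is that processes of the same type have the same trace. The normalization constraints fix $\tr W$ to be the product of the output dimensions, e.g.\ $\tr W=d_{A_O}d_{B_O}$ in the tripartite example. Therefore $\omega=pW_1/\tr W_1+(1-p)W_2/\tr W_2$ with no reweighting of $p$, that is, $\omega=p\,\omega_1+(1-p)\,\omega_2$. Checking this common trace is the only step that needs care. If the weights were reweighted, the naive inequality would fail as stated.

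With this in hand, I use $\cD(\omega)=p\,\cD(\omega_1)+(1-p)\,\cD(\omega_2)$ and invoke joint convexity of the quantum relative entropy:
\begin{align}
    S\!\left(p\omega_1+(1-p)\omega_2\,\Vert\,p\cD(\omega_1)+(1-p)\cD(\omega_2)\right)\le p\,S(\omega_1\Vert\cD(\omega_1))+(1-p)\,S(\omega_2\Vert\cD(\omega_2)).
\end{align}
By the definition in Eq.~\eqref{eq:process-crel}, this is exactly Eq.~\eqref{eq:process-coherence-convexity}.

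Joint convexity handles the relative entropy without requiring any minimization over $\IP$. This is why I work with the form $S(\omega\Vert\cD(\omega))$ rather than $S(\cD(\omega))-S(\omega)$.
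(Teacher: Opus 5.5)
Your proposal is correct and follows essentially the same route as the paper. You use linearity of $\cD$ for convexity of $\IP$, the common trace of same-type processes to get $\omega=p\,\omega_1+(1-p)\,\omega_2$, and then joint convexity of the relative entropy; your explicit check that the mixture is still a valid process, via the affine normalization constraints, is a small addition the paper leaves implicit.
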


\begin{proof}
Linearity of $\cD$ immediately implies convexity of $\IP$: if $\cD(W_1)=W_1$ and $\cD(W_2)=W_2$, then
\begin{align}
    \cD\!\left(pW_1+(1-p)W_2\right) = pW_1+(1-p)W_2.
\end{align}
For processes of a fixed type, the normalization constraints fix the trace, so $\tr W_1=\tr W_2$ and the normalized process operator of $W=pW_1+(1-p)W_2$ is $\omega=p\omega_1+(1-p)\omega_2$. Since $\cD$ is linear, joint convexity of quantum relative entropy gives
\begin{align}
    \mathcal{C}_{\rel}(W) &= S\!\left( p\omega_1+(1-p)\omega_2 \,\middle\Vert\, p\cD(\omega_1)+(1-p)\cD(\omega_2) \right) \nonumber\\
    &\le p\,S\!\left(\omega_1\Vert\cD(\omega_1)\right) +(1-p)\,S\!\left(\omega_2\Vert\cD(\omega_2)\right),
\end{align}
which is Eq.~\eqref{eq:process-coherence-convexity}.
\end{proof}

The set $\IP$ is convex by Proposition~\ref{prop:convexity} and closed as the fixed-point set of the continuous linear map $\cD$ within the set of valid processes (which is a closed set, since it includes its boundary). Therefore, whenever $W\notin\IP$, the separating-hyperplane theorem guarantees a coherence witness. 

\begin{definition}[Coherence witness]
With the Choi convention used for higher-order transformations, we call $S=S^\dagger$ a coherence witness if
\begin{align}
    \forall_{W_{\inc}\in\IP}  \tr(S^T W_{\inc})\ge0 \quad \land \quad \exists_W \tr(S^T W)<0, \label{eq:process-witness}
\end{align}
\end{definition}
Such witnesses are experimentally accessible through suitable process measurements, analogously to standard process-matrix witnesses \cite{araujo2015witnessing,Costa2026}.

Following the resource-theoretic frameworks for multi-time processes \cite{Berk2021} or superchannels \cite{Saxena2020}, we take the free transformations to be valid deterministic superprocesses $\Theta$ that are dephasing covariant, $\Theta\circ\mathcal D=\mathcal D\circ\Theta$. This covariance condition is sufficient for monotonicity of Eq.~\eqref{eq:process-crel} by data processing,
\begin{align}
    \mathcal{C}_{\rel}(\Theta(W)) =  S\!\left(\Theta(\omega)\Vert\Theta(\cD(\omega))\right) \le \mathcal{C}_{\rel}(W). \label{eq:disp-monotonicity}
\end{align}

\section{A solvable non-Markovian model} \label{sec:ising-model-2}

We now consider a minimal system--environment model as a proof of principle that multi-time tests can reveal coherence inaccessible to conventional state- and channel-based probes. Importantly, the model admits a classical-memory representation \cite{Giarmatzi2021,Taranto2024,Goswami2025}: a persistent environmental label correlates coherent unitary histories across different time intervals, allowing coherence to survive at the level of the full process even when it is absent from the corresponding reduced channels.

Complementary examples are discussed in Appendices~\ref{app:ising-model-1} and \ref{app:dipole-model}. Appendix~\ref{app:ising-model-1} presents a simpler one-qubit environment construction exhibiting a multi-time advantage over the standard fixed-origin channel description and shows that the residual process coherence grows linearly with the number of observation times. Appendix~\ref{app:dipole-model} considers a physically motivated magnetic dipole--dipole interaction. In this case the environmental memory is genuinely quantum, yet the same stronger separation persists as in our main-text example: all three pairwise channels between three observation times can be incoherent while the full three-time process remains coherent.

Let the system $S$ and two environmental memories $E_1,E_2$ be qubits, and consider
\begin{align}
    H^{SE_1E_2} = 2g\,Z^S Z^{E_1} + g\,Z^S Z^{E_2}, \label{eq:ising2-hamiltonian}
\end{align}
where $g$ is a real parameter. We take the environment to be initially maximally mixed, $\sigma^{E_1E_2} = \frac14\id_{E_1E_2}$. Writing the environmental $Z$ eigenvalues as $s_1,s_2=\pm1$, the joint unitary evolution operator equals
\begin{align}
    U(t)
    =
    \sum_{s_1,s_2=\pm1}
    \ketbra{s_1s_2}{s_1s_2}^{E_1E_2}
    \otimes
    U_{s_1s_2}(t),
    \label{eq:ising2-controlled-unitary}
\end{align}
where $U_{s_1s_2}(t) = \exp\!\left[-\ii gt(2s_1+s_2)Z\right]$. Both environmental labels are conserved by the interaction, so they constitute a classical memory shared by the different temporal intervals \cite{Goswami2025}.

\begin{figure}[t]
    \centering
    \includegraphics[width=\linewidth]{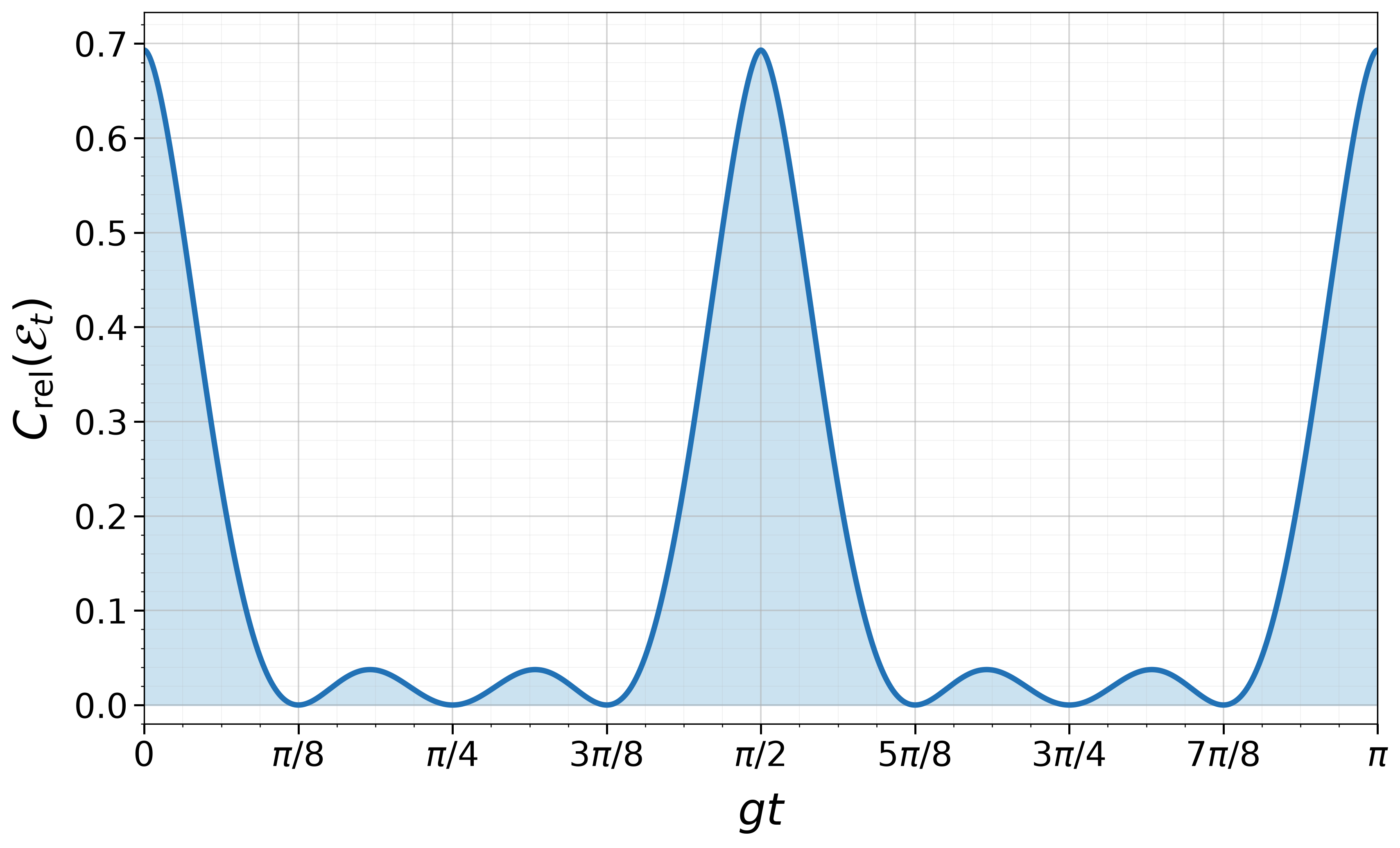}
    \caption{Relative entropy of coherence $C_{\rm rel}(\cE_t)$ of the reduced channel as a function of the dimensionless interaction time $gt$. The coherence vanishes whenever $\gamma(t)=\cos(4gt)\cos(2gt)=0$, corresponding to a completely dephasing reduced channel.}
    \label{fig:ising2_channel_coherence}
\end{figure}

\subsection{Reduced channels}

Tracing over the environment gives
\begin{align}
    \cE_t(\rho) = \frac{1+\gamma(t)}{2}\rho + \frac{1-\gamma(t)}{2}Z\rho Z, \label{eq:ising2-reduced-channel}
\end{align}
where
\begin{align}
    \gamma(t) = \cos(4gt)\cos(2gt). \label{eq:ising2-gamma}
\end{align}
The normalized Choi state has support on $\{\ket{00},\ket{11}\}$ and reads
\begin{align}
    \widetilde J_{\cE_t} =  \frac12 \begin{pmatrix}
        1 & \gamma(t)\\
        \gamma(t) & 1
    \end{pmatrix}. \label{eq:ising2-choi}
\end{align}
Consequently,
\begin{align}
    C_{\rel}(\cE_t) = \ln2 - h_2\!\left[ \frac{1+\gamma(t)}{2} \right], \label{eq:ising2-channel-coherence}
\end{align}
where $h_2(q)=-q\ln q-(1-q)\ln(1-q)$.

For $\tau=\frac{\pi}{8g}$, at both $\tau$ and $2\tau$, $\gamma(\tau)=0=\gamma(2\tau)$, and the channels are fully dephasing, $\cE_\tau = \D = \cE_{2\tau}$. Consider the three observation times $T_0=0$, $T_1=\tau$, and $T_2=2\tau$. The channel $T_0\rightarrow T_1$ is $\cE_\tau$ and the channel $T_0\rightarrow T_2$ is $\cE_{2\tau}$. Moreover, because the environmental populations are conserved and their marginal state remains $\frac14\id$, a causal break at $T_1$ followed by a fresh preparation produces the same reduced channel $\cE_\tau$ during the second interval, $\cE_{T_1\rightarrow T_2} =\D$, and
\begin{align}
    C_{\rel}(\cE_{T_0\rightarrow T_1}) = C_{\rel}(\cE_{T_1\rightarrow T_2})  =0= C_{\rel}(\cE_{T_0\rightarrow T_2}).
    \label{eq:ising2-all-channel-coherences}
\end{align}
The relative entropy of coherence of the reduced channel as a function of interaction time is shown in Fig.~\ref{fig:ising2_channel_coherence}.

\subsection{The three-time process}

The full process retains the same environmental labels across both intervals. For arbitrary interaction times $\tau_1$ and $\tau_2$ it is
\begin{align}
    W_{\NM}(\tau_1,\tau_2) = \frac14 \sum_{s_1,s_2=\pm1} &\PProj{U_{s_1s_2}(\tau_1)}^{A_OB_I} \nonumber\\
    &\otimes \PProj{U_{s_1s_2}(\tau_2)}^{B_OC_I}, \label{eq:ising2-process}
\end{align}
For each Choi pair, define the logical support basis $\ket{0_L}=\ket{00}$, $\ket{1_L}=\ket{11}$. With Eq.~\eqref{eq:ising2-gamma}, introduce
\begin{align}
    \gamma_1&=\gamma(\tau_1), & \gamma_2&=\gamma(\tau_2), & \gamma_\pm&=\gamma(\tau_1\pm\tau_2). \label{eq:ising2-gammas}
\end{align}
In the basis $\{\ket{0_L0_L},\ket{0_L1_L}, \ket{1_L0_L},\ket{1_L1_L}\}$, the normalized process Choi state $\widetilde W_{\NM}=W_{\NM}/\tr W_{\NM}$ is
\begin{align}
    \widetilde W_{\NM}(\tau_1,\tau_2) = \frac14
    \begin{pmatrix}
        1 & \gamma_2 & \gamma_1 & \gamma_+ \\
        \gamma_2 & 1 & \gamma_- & \gamma_1 \\
        \gamma_1 & \gamma_- & 1 & \gamma_2 \\
        \gamma_+ & \gamma_1 & \gamma_2 & 1
    \end{pmatrix}.
    \label{eq:ising2-process-matrix-general}
\end{align}
whose four eigenvalues are
\begin{align}
    \lambda^{+}_\pm &= \frac14\! \left[ 1+\frac{\gamma_++\gamma_-}{2} \pm \frac12 \sqrt{ (\gamma_+-\gamma_-)^2 +4(\gamma_1+\gamma_2)^2 }  \right]\!, \label{eq:ising2-eigenvalues-even}\\
    \lambda^{-}_\pm  &= \frac14 \! \left[ 1-\frac{\gamma_++\gamma_-}{2} \pm \frac12  \sqrt{ (\gamma_+-\gamma_-)^2 +4(\gamma_2-\gamma_1)^2 } \right]\!.
    \label{eq:ising2-eigenvalues-odd}
\end{align}
Since all diagonal entries of the normalized process Choi state are equal, $\cD(\widetilde W_{\NM})  = \frac14\id_4$, and  $S[\cD(\widetilde W_{\NM})] = \ln4$. Finally, the relative entropy of process coherence for arbitrary $\tau_1,\tau_2$ equals
\begin{align}
    \mathcal C_{\rel}  [W_{\NM}(\tau_1,\tau_2)] &= \ln4 + \sum_{\sigma=\pm} \sum_{\eta=\pm}  \lambda^{\sigma}_\eta  \ln\lambda^{\sigma}_\eta. \label{eq:ising2-process-coherence-general}
\end{align}

\begin{figure}[t]
    \centering
    \begin{overpic}[width=\linewidth]{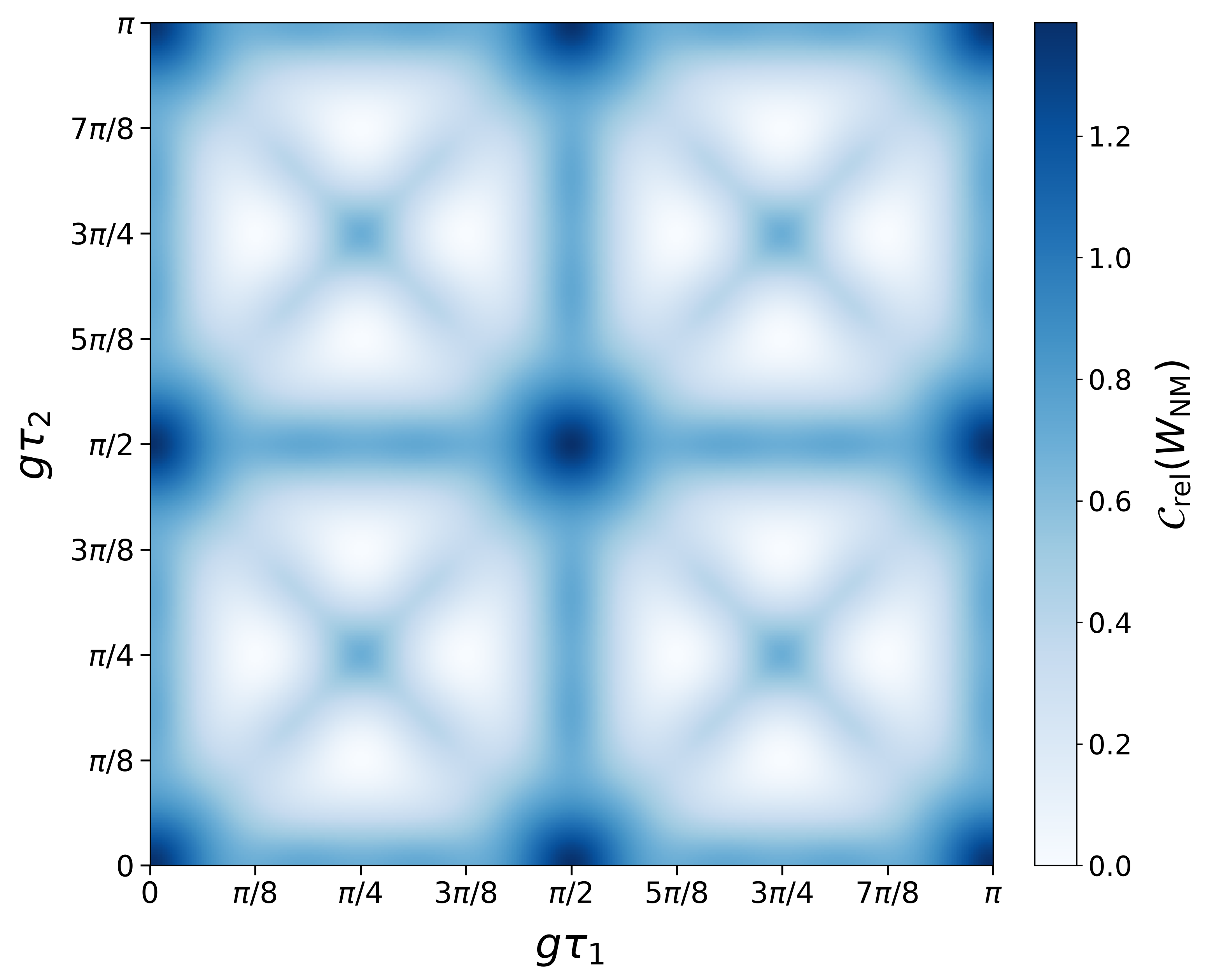}
        \put(1,78){(a)}
    \end{overpic}
    \vspace{0.5em}
    \begin{overpic}[width=\linewidth]{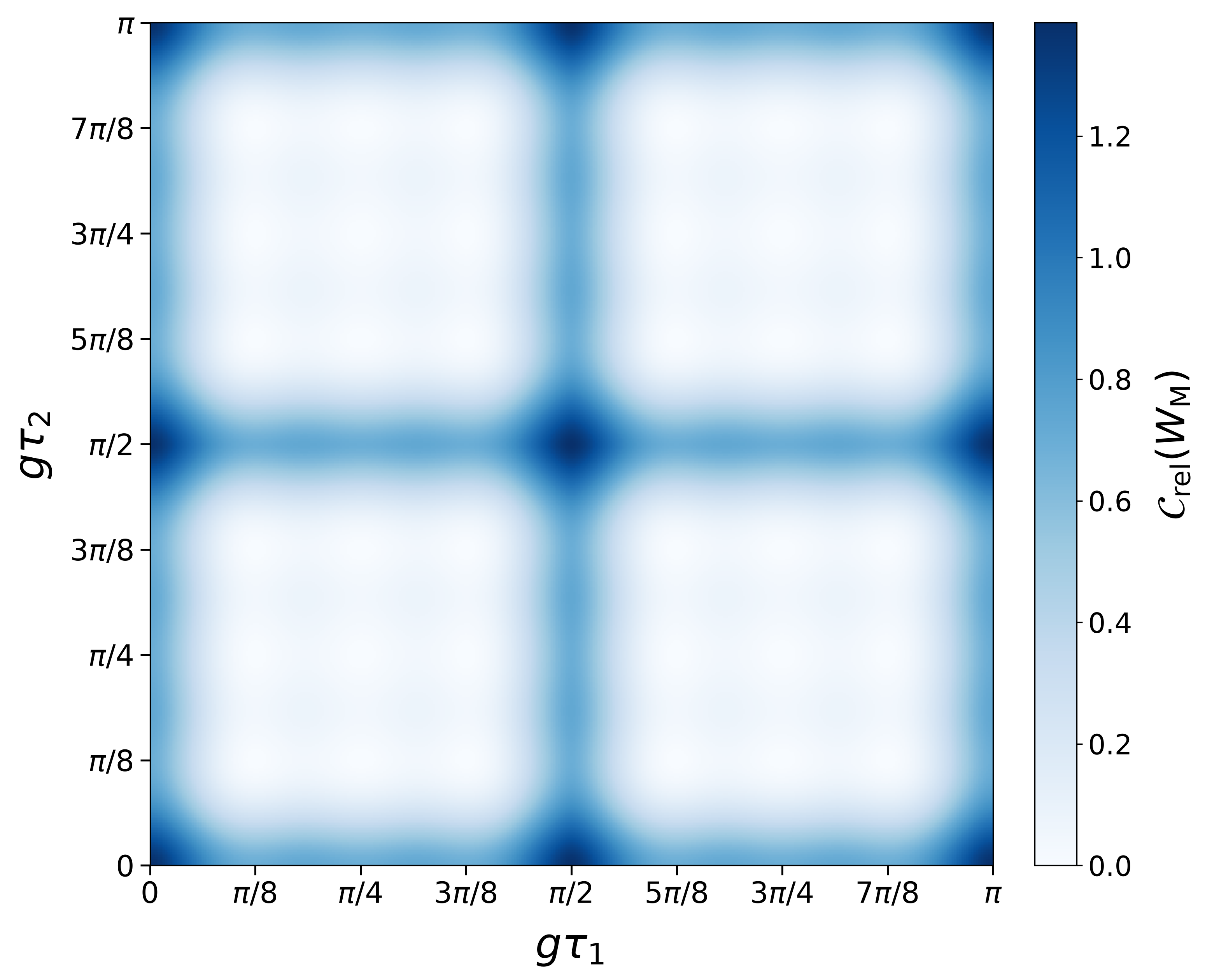}
        \put(1,78){(b)}
    \end{overpic}
    \caption{Relative entropy of coherence (a) for the non-Markovian multi-time process and (b) for the corresponding Markovian process as functions of $\tau_1$ and $\tau_2$. The Markovian coherence vanishes when both reduced channels are incoherent, whereas the non-Markovian process retains nonzero coherence.}
    \label{fig:ising2_multitime_coherence}
\end{figure}

The corresponding Markovian process is obtained by independently averaging over the environmental labels in the two intervals \cite{pollock2018non},
\begin{align}
    W_{\rm M}(\tau_1,\tau_2) = J_{\cE_{\tau_1}}^{A_OB_I} \otimes J_{\cE_{\tau_2}}^{B_OC_I}. \label{eq:ising2-markovian-process}
\end{align}
In the same normalized support representation,
\begin{align}
    \widetilde W_{\rm M}(\tau_1,\tau_2) = \frac14
    \begin{pmatrix}
        1 & \gamma_2 & \gamma_1 & \gamma_1\gamma_2 \\
        \gamma_2 & 1 & \gamma_1\gamma_2 & \gamma_1 \\
        \gamma_1 & \gamma_1\gamma_2 & 1 & \gamma_2 \\
        \gamma_1\gamma_2 & \gamma_1 & \gamma_2 & 1
    \end{pmatrix}.
    \label{eq:ising2-markovian-process-matrix}
\end{align}
Its coherence is additive,
\begin{align}
    \mathcal C_{\rel} [W_{\rm M}(\tau_1,\tau_2)] &=  C_{\rel}(\cE_{\tau_1}) + C_{\rel}(\cE_{\tau_2})  \nonumber\\
    &= 2\ln2  - h_2\!\left(\frac{1+\gamma_1}{2}\right) - h_2\!\left(\frac{1+\gamma_2}{2}\right). \label{eq:ising2-markovian-coherence}
\end{align}
The two matrices in Eqs.~\eqref{eq:ising2-process-matrix-general} and \eqref{eq:ising2-markovian-process-matrix} differ in their correlations between the two temporal intervals: the Markovian process contains the products $\gamma_1\gamma_2$, whereas the non-Markovian process retains the coherent-history factors $\gamma_\pm$. Their relative entropies of coherence are plotted in Fig.~\ref{fig:ising2_multitime_coherence}.

Setting $\tau_1=\tau_2=\tau=\frac{\pi}{8g}$, Eq.~\eqref{eq:ising2-process-matrix-general} reduces to
\begin{align}
    \widetilde W_{\NM}  = \frac14
    \begin{pmatrix}
        1&0&0&0\\
        0&1&1&0\\
        0&1&1&0\\
        0&0&0&1
    \end{pmatrix} \quad \Rightarrow \quad  \mathcal C_{\rel}(W_{\NM}) = \frac12\ln2
    \label{eq:ising2-process-coherence}
\end{align}
This model also illustrates the distinction from the notion of classicality in Ref.~\cite{Milz2020}. Since $[H,Z^S]=0$, sequential measurements in the $Z$ basis satisfy the corresponding Kolmogorov consistency conditions. The surviving off-diagonal temporal correlations in Eq.~\eqref{eq:ising2-process-coherence} are invisible to such fixed-basis probes but can be accessed by a witness.

By contrast, the Markovian process yields
\begin{align}
    \widetilde W_{\rm M} = \frac14\id_4 \quad \Rightarrow \quad \mathcal C_{\rel}(W_{\rm M})=0.\label{eq:ising2-markovian-zero}
\end{align}
The coherence in Eq.~\eqref{eq:ising2-process-coherence} cannot be attributed to either temporal interval individually. It arises from correlations between coherent histories associated with the persistent classical memory.

\subsection{A multi-time coherence witness}

The multi-time coherence in the above non-Markovian process can be detected directly without reconstructing the complete process matrix. Consider a coherence witness (see Section~\ref{sec:process-coherence})
\begin{align}
    S = -X^{A_O}\otimes X^{B_I}\otimes X^{B_O}\otimes X^{C_I}. \label{eq:ising2-witness}
\end{align}
For every incoherent process,
\begin{align}
    \tr(S\widetilde W_{\inc})=0, \label{eq:ising2-witness-incoherent}
\end{align}
since $S$ has no diagonal components in the reference basis, while for Eq.~\eqref{eq:ising2-process-coherence},
\begin{align}
    \sigma \coloneqq  \tr(S\widetilde W_{\NM}) = -\frac12, \label{eq:ising2-witness-value}
\end{align}
certifying coherence of the multi-time process at the chosen interaction times.

The emergence of coherence in the multi-time setting has a direct interpretation in terms of the Hahn spin echo~\cite{Hahn1950} and dynamical-decoupling protocols~\cite{Viola1999,Suter2016}. The conserved labels $(s_1,s_2)$ act as a quasi-static random field, so that averaging over them dephases the reduced dynamics, while an intermediate $\pi$ pulse refocuses the accumulated phases since $XU_{s_1s_2}(\tau)X=U^\dagger_{s_1s_2}(\tau)$ and hence $U_{s_1s_2}(\tau)XU_{s_1s_2}(\tau)=X$. In Eq.~\eqref{eq:ising2-process-matrix-general}, $\gamma_+=\gamma(\tau_1+\tau_2)$ describes the unrefocused contribution, whereas $\gamma_-=\gamma(\tau_1-\tau_2)$ contains the echo contribution, with $\Tr \!\, (S\widetilde W_{\rm NM})=-(\gamma_++\gamma_-)/2$. At $\tau_1=\tau_2=\tau$, $\gamma_+=0$ and $\gamma_-=1$, so the witness signal originates entirely from the refocused sector. In the Markovian process the environmental label is refreshed between the intervals, this contribution factorizes as $\gamma_1\gamma_2$ and vanishes at the same operating point.

For the individual reduced channels, Eq.~\eqref{eq:ising2-choi}, the off-diagonal Choi amplitude can be obtained from two Pauli correlators,
\begin{align}
    \gamma = \langle X\otimes X\rangle - \ii\langle X\otimes Y\rangle. \label{eq:ising2-channel-diagnostic}
\end{align}
Within the dephasing family generated by Eq.~\eqref{eq:ising2-hamiltonian}, $\gamma=0$ is equivalent to vanishing channel coherence.

\subsection{Experimental implementation}
Multi-time processes have previously been characterized on superconducting quantum processors, both for non-Markovian control \cite{White2020} and by full process tomography \cite{Giarmatzi2025}. 

We implement our protocol on IBM superconducting qubits \cite{ibm}. In the convention $R_{ZZ}(\theta)=\exp(-\ii\theta Z\otimes Z/2)$, one interaction interval is
\begin{align}
    U_\tau =  R_{ZZ}^{SE_1}\!\left(\frac{\pi}{2}\right) R_{ZZ}^{SE_2}\!\left(\frac{\pi}{4}\right).
    \label{eq:ising2-rzz}
\end{align}
The maximally mixed memory state is realized by equally averaging the four environmental preparations
$\ket{00},\ket{01},\ket{10},\ket{11}$, with the same preparation retained through both interaction intervals.

To access the process Choi operator, a maximally entangled pair is prepared for each temporal input. After the first interaction, the output system is stored by a SWAP operation and a fresh half of the second entangled pair is injected into the second interval. The four remaining Choi legs are then measured in the $X$ basis, directly yielding Eq.~\eqref{eq:ising2-witness}. The protocol is illustrated in Fig.~\ref{fig:ising2-experiment}.

\begin{figure}[t]
    \centering
    \includegraphics[width=\linewidth]{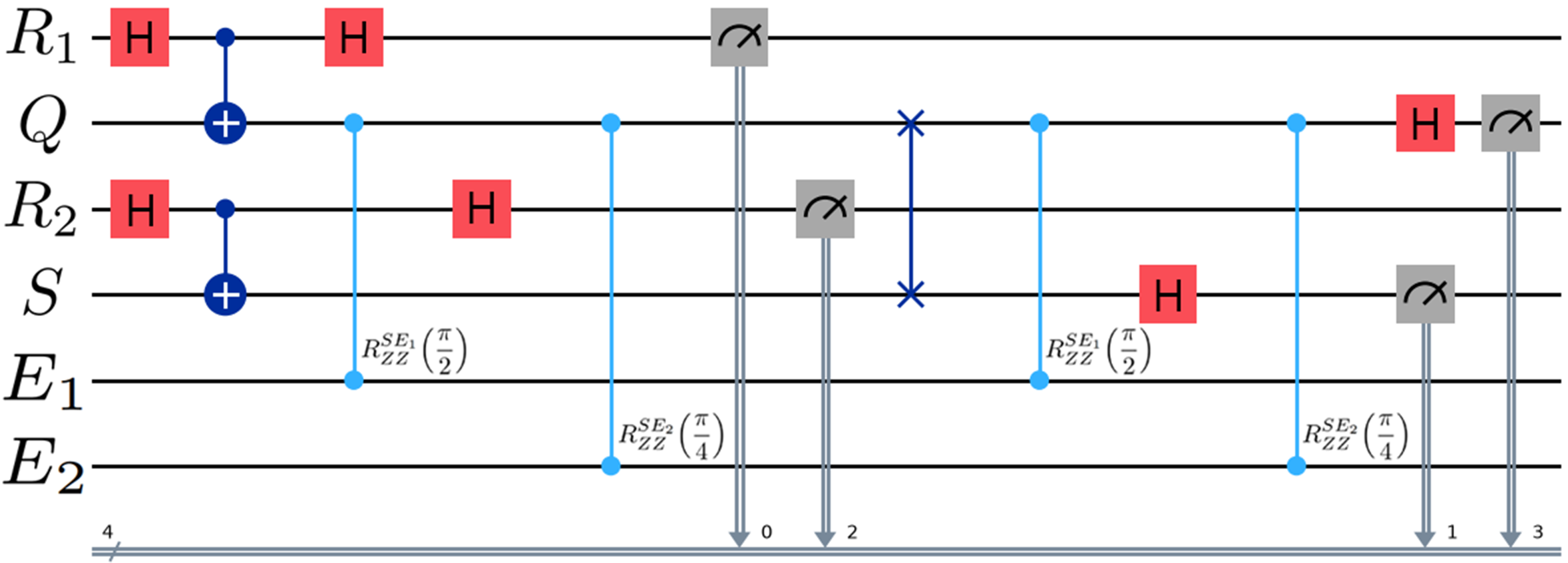}
    \caption{Ancilla-assisted measurement of the three-time process. Two Bell pairs retain the input and output Choi legs of the two consecutive intervals, while the same environmental qubits $E_1,E_2$ interact with the system in both intervals. A SWAP at the intermediate time stores the first output and injects the second Choi probe. The four visible process legs are measured in the $X$ basis by applying Hadamard gates followed by computational-basis measurements to evaluate the witness expectation value.}
    \label{fig:ising2-experiment}
\end{figure}

On the \texttt{ibm\_fez} quantum processor we obtain
\begin{align}
    |\gamma_{T_0\to T_1}|&=0.018\pm0.012,\nonumber\\
    |\gamma_{T_1\to T_2}|&=0.045\pm0.012,\nonumber\\
    |\gamma_{T_0\to T_2}|&=0.054\pm0.015,\nonumber\\
    |\sigma|&=0.3486\pm0.0146.
    \label{eq:ising2-qpu-results}
\end{align}
Despite increased hardware noise on the quantum processor, the multi-time witness remains $23.8$ shot-noise standard deviations above zero and retains a substantial fraction of its ideal value $0.5$, while the three channel diagnostics stay close to their ideal value of zero.

Full Choi tomography gives
\begin{align}
    \mathcal C_{\rel}(W_{\NM})=0.2689,
    \label{eq:ising2-experimental-process-coherence}
\end{align}
compared with the ideal value $\frac12\ln2\simeq0.3466$, while the reconstructed coherences of the three individual channels remain below $2.2\times10^{-2}$. The complete tomographic reconstruction and experimental details are given in Appendix~\ref{app:ising2-experiment}.

\section{Discussion and outlook}
\label{sec:discussion}

We introduced a resource theory of coherence for finite-dimensional multi-time processes, with incoherent processes defined by invariance under dephasing of all open legs. Relative entropy quantifies the coherence, while linear witnesses allow it to be detected without full process tomography.

Access to intermediate times can uncover coherence that is invisible in state- and channel-based descriptions. In the Ising model in which a system qubit interacts with two persistent environmental qubits, all three pairwise channels can be completely dephasing while the full three-time process remains coherent. The effect is carried by correlations between unitary histories generated by a persistent classical memory and does not require a quantum environment. On an IBM superconducting processor, both the four-leg witness and full process tomography recover this separation, despite residual hardware-induced channel coherence.

Possible extensions include continuously monitored and optomechanical open systems \cite{costa2026continuoustime}, where memory and multi-time access arise naturally, as well as extensions to continuous reference bases \cite{sajnok2026position}. A further direction is to explore coherence in the context of processes with indefinite causal order. Even though fully classical processes can have highly exotic causal structure, interpretable as a form of time travel \cite{Baumeler2017,Baumeler_2019, Tobar_2020}, coherence plays a prominent role in the physically relevant class of processes with quantum control of causal order \cite{Costa2020,Wechs2021,Grothus2025}, pointing to an intriguing connection between quantum resources and quantum causal structures.

\section*{Acknowledgments}
We acknowledge support from the International Network on Acausal Quantum Technologies (INAQT), supported by the EPSRC under grant EP/W026910/1, and from the Knut and Alice Wallenberg Foundation through the Wallenberg Initiative for Network and Quantum Information (WINQ). The authors have benefited from the activities of COST Action CA23115: Relativistic Quantum Information, funded by European Cooperation in Science and Technology. KS also acknowledges support from the 2026 Competition for Young Scientists of the Center for Theoretical Physics of the Polish Academy of Sciences and thanks Nordita for providing a summer internship opportunity. 

We acknowledge the use of IBM Quantum services for this work. The views expressed are those of the authors, and do not reflect the official policy or position of IBM or the IBM Quantum team.

The authors used ChatGPT (5.6 Sol, 6 Astra) and Claude (Opus 5.5) for literature synthesis, revision of the manuscript, and code development. The tools were used interactively on author-defined models, assumptions, equations, and tasks, with all outputs independently checked and corrected where necessary by the authors.

\bibliography{bibliography}

\appendix
\section{The Ising model with single environment qubit} \label{app:ising-model-1}

A single environmental qubit already gives a separation between fixed-origin channels and the full three-time process. In this model the channel between the two later times can still be coherent, so the separation is weaker than for the two-memory model of the main text. The same construction also extends directly to an arbitrary number of interaction intervals.

Let both the system $S$ and the environment $E$ be qubits, and consider
\begin{align}
    H^{ES}=Z^E\otimes Z^S, \label{eq:ising-hamiltonian}
\end{align}
where $Z=\ketbra{0}-\ketbra{1}$ is the third Pauli matrix. Writing the environmental $Z$ eigenvalue as $s=\pm1$, the unitary evolution operator is
\begin{align}
    U(t) = \sum_{s=\pm1}\ketbra{s}{s}^{E}\otimes U_s(t), \label{eq:ising-controlled-unitary}
\end{align}
where $U_s(t)=e^{-istZ}$. We take the balanced environmental state $\sigma^E=\frac{1}{2}\id_E$ unless stated otherwise. The label $s$ is conserved throughout the evolution and acts as a classical memory shared by all time intervals.

\paragraph{Reduced channel from a fixed initial time}

The reduced system channel after a time $t$ is
\begin{align}
    \cE_t(\rho) = \frac12\sum_{s=\pm1}U_s(t)\rho U_s^\dagger(t)=\cos^2t\,\rho+\sin^2t\,Z\rho Z. \label{eq:ising-reduced-channel}
\end{align}
Restricting to the nonzero support spanned by $\{\ket{00},\ket{11}\}$, normalized Choi state of Eq.~\eqref{eq:ising-reduced-channel} has the nonzero block
\begin{align}
    \widetilde{J}_{\cE_t}=\frac12 \begin{pmatrix} 1&\cos2t \\ \cos2t&1 \end{pmatrix}. \label{eq:ising-choi-block}
\end{align}
while the channel coherence in the $\ket{0}$, $\ket{1}$ basis (Fig.~\ref{fig:two-step-channel-coherence}) equals
\begin{align}
    C_{\rel}(\cE_t) = \ln2 - h_2\!\left(\frac{1+\cos2t}{2}\right), \label{eq:ising-channel-coherence}
\end{align}
where $h_2(q)\coloneqq-q\ln q-(1-q)\ln(1-q)$ is the binary entropy. Equation~\eqref{eq:ising-channel-coherence} vanishes exactly when $t=\frac{\pi}{4}+\frac{\pi n}{2}$, for $n\in\mathbb Z$, because then $\cE_t(\rho)=\frac12(\rho+Z\rho Z)=\D(\rho)$. Consequently, at these times every initial state is mapped to a state diagonal in the $Z$ basis, so both state coherence and channel coherence vanish.

\begin{figure}[t]
    \centering
    \includegraphics[width=\linewidth]{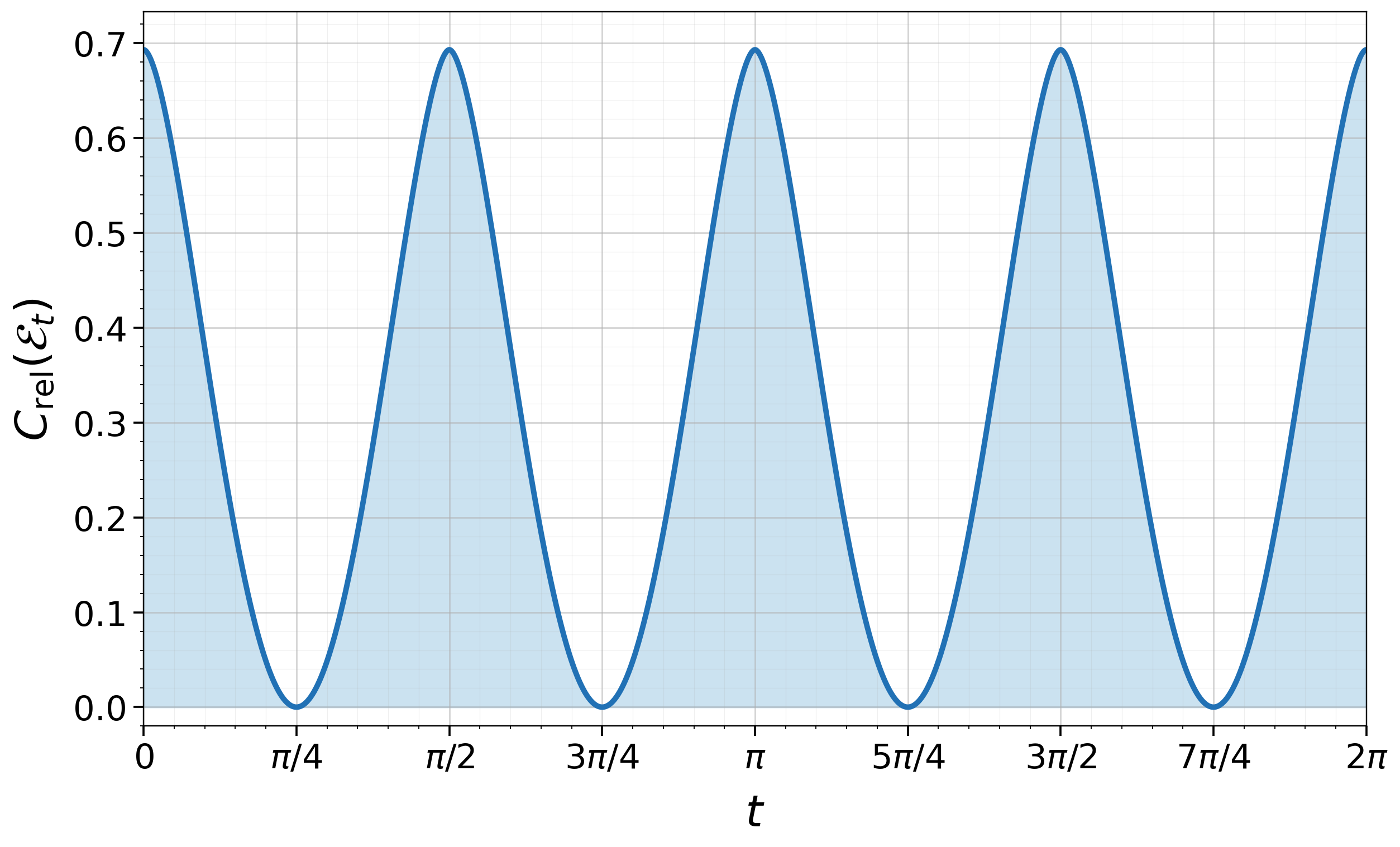}
    \caption{Relative entropy of coherence of the reduced fixed-origin channel $\cE_t$. The channel is completely dephasing at $t=\frac\pi4+n\frac\pi2$.}
    \label{fig:two-step-channel-coherence}
\end{figure}

\paragraph{The three-time process}

Let $\tau_1$ and $\tau_2$ denote the lengths of two consecutive interaction intervals, so that the absolute observation times are $T_1=\tau_1$, $T_2=\tau_1+\tau_2$. Keeping the same environmental label $s$ across both intervals and tracing out $E$ only after the second interaction gives the non-Markovian process
\begin{align}
    W_{\NM}^{A_OB_IB_OC_I} &= \frac12\sum_{s=\pm1} \PProj{U_s(\tau_1)}^{A_OB_I} \otimes \PProj{U_s(\tau_2)}^{B_OC_I}\!, \label{eq:nm-process-mixture}
\end{align}
where
\begin{align}
    \PProj{U_s(t)} = \sum_{\beta,\eta=0}^{1} \ee^{-\ii s t\left[(-1)^\beta-(-1)^\eta\right]} \ketbra{\beta}{\eta} \otimes \ketbra{\beta}{\eta},
\end{align}
that for $s=+1$ is the Choi operator of $U_{+}(t)=\ee^{-\ii tZ}$, while for $s=-1$ the Choi operator of $U_{-}(t)=\ee^{+\ii tZ}$.

The memory is classical, but it correlates the two conditional unitary histories. In the normalized nonzero support the process is
\begin{align}
    \widetilde{W}_{\NM} = \frac14 \begin{pmatrix}
        1&c_2&c_1&c_+\\
        c_2&1&c_-&c_1\\
        c_1&c_-&1&c_2\\
        c_+&c_1&c_2&1
    \end{pmatrix}, \label{eq:nm-block}
\end{align}
where $c_1=\cos2\tau_1$, $c_2=\cos2\tau_2$, and $c_\pm=\cos2(\tau_1\pm\tau_2)$. Its nonzero eigenvalues are  $\lambda_\pm = \frac12(1\pm c_1c_2)$, while $\cD\!\left(\widetilde{W}_{\NM}\right)=\frac14\id_4$. We get
\begin{align}
    \mathcal{C}_{\rel}(W_{\NM}) = \ln4- h_2\!\left(\frac{1+c_1c_2}{2}\right), \label{eq:nm-coherence}
\end{align}
that satisfies $\mathcal{C}_{\rel}\ge\ln2$ for all $\tau_1,\tau_2$, implying a finite lower bound on the coherence of the multi-time process in the chosen basis; see Fig.~\ref{fig:ising_multitime_coherence}.

The channel representation of Eq.~\eqref{eq:nm-process-mixture} makes the role of the shared memory explicit. For independent inputs $\rho_1$ on $A_O$ and $\rho_2$ on $B_O$,
\begin{align}
    \cW_{\tau_1,\tau_2}(\rho_1\otimes\rho_2) =& \cE_{\tau_1}(\rho_1)\otimes\cE_{\tau_2}(\rho_2) \nonumber\\
    &-\frac14\sin(2\tau_1)\sin(2\tau_2) [Z,\rho_1]\otimes[Z,\rho_2]. \label{eq:ising-process-channel}
\end{align}
The second term is absent from the product of the two reduced channels. It arises because the same sign $s$ multiplies the coherent part of both conditional unitary rotations.

\paragraph{Advantage over fixed-origin channels} \label{sec:fixed-origin-advantage}

In the standard open-system description the initial time is fixed at $T_0=0$, and the dynamics is characterized through states or channels from $T_0$ to later observation times. For the present model the channel from $T_0$ to $T_1$ is $\cE_{\tau_1}$. If no intervention is performed at $T_1$, the same environmental label persists and the channel from $T_0$ to $T_2$ is
\begin{align}
    \cE_{0\to T_2}(\rho) &= \frac12\sum_{s=\pm1} U_s(\tau_2)U_s(\tau_1) \rho U_s^\dagger(\tau_1)U_s^\dagger(\tau_2). \label{eq:fixed-origin-second-channel}
\end{align}
Choose $T_1=\frac{\pi}{4}$, $T_2=\frac{3\pi}{4}$. Both fixed-origin channels are then completely dephasing:
\begin{align}
    \cE_{0\to T_1}&=\D= \cE_{0\to T_2}\\
    C_{\rel}(\cE_{0\to T_1}) &= 0 = C_{\rel}(\cE_{0\to T_2}). \label{eq:state-channel-zero}
\end{align}

The full three-time process remains coherent. Substituting $\tau_1=T_1=\frac\pi4$, $\tau_2=T_2-T_1=\frac\pi2$ into Eq.~\eqref{eq:nm-block} gives
\begin{align}
    \widetilde{W}_{\NM} = \frac14 \begin{pmatrix}
        1&0&-1&0\\
        0&1&0&-1\\
        -1&0&1&0\\
        0&-1&0&1
    \end{pmatrix}\; \Rightarrow \; \mathcal{C}_{\rel}(W_{\NM})=\ln2.
\label{eq:main-separation-process}
\end{align}

\paragraph{Comparison with the Markovian product process} \label{app:markovian-comparison}

Resetting the environment after each interval gives the Markovian process
\begin{align}
    W_{\M} = J_{\cE_{\tau_1}}^{A_OB_I} \otimes J_{\cE_{\tau_2}}^{B_OC_I}. \label{eq:markovian-product}
\end{align}
For normalized Choi operators, relative-entropy coherence is additive on tensor products, so
\begin{align}
    \mathcal{C}_{\rel}(W_{\M}) = C_{\rel}(\cE_{\tau_1}) + C_{\rel}(\cE_{\tau_2}), \label{eq:markovian-coherence}
\end{align}
which is plotted in Fig.~\ref{fig:ising_multitime_coherence}. At $\tau_1=\frac\pi4=\tau_2$ this gives $\mathcal{C}_{\rel}(W_{\M})=0$, whereas Eq.~\eqref{eq:main-separation-process} gives $\mathcal{C}_{\rel}(W_{\NM})=\ln2$. The difference is generated by the persistent classical memory label shared by the two conditional unitary histories.

\begin{figure}[t]
    \centering
    \begin{overpic}[width=\linewidth]{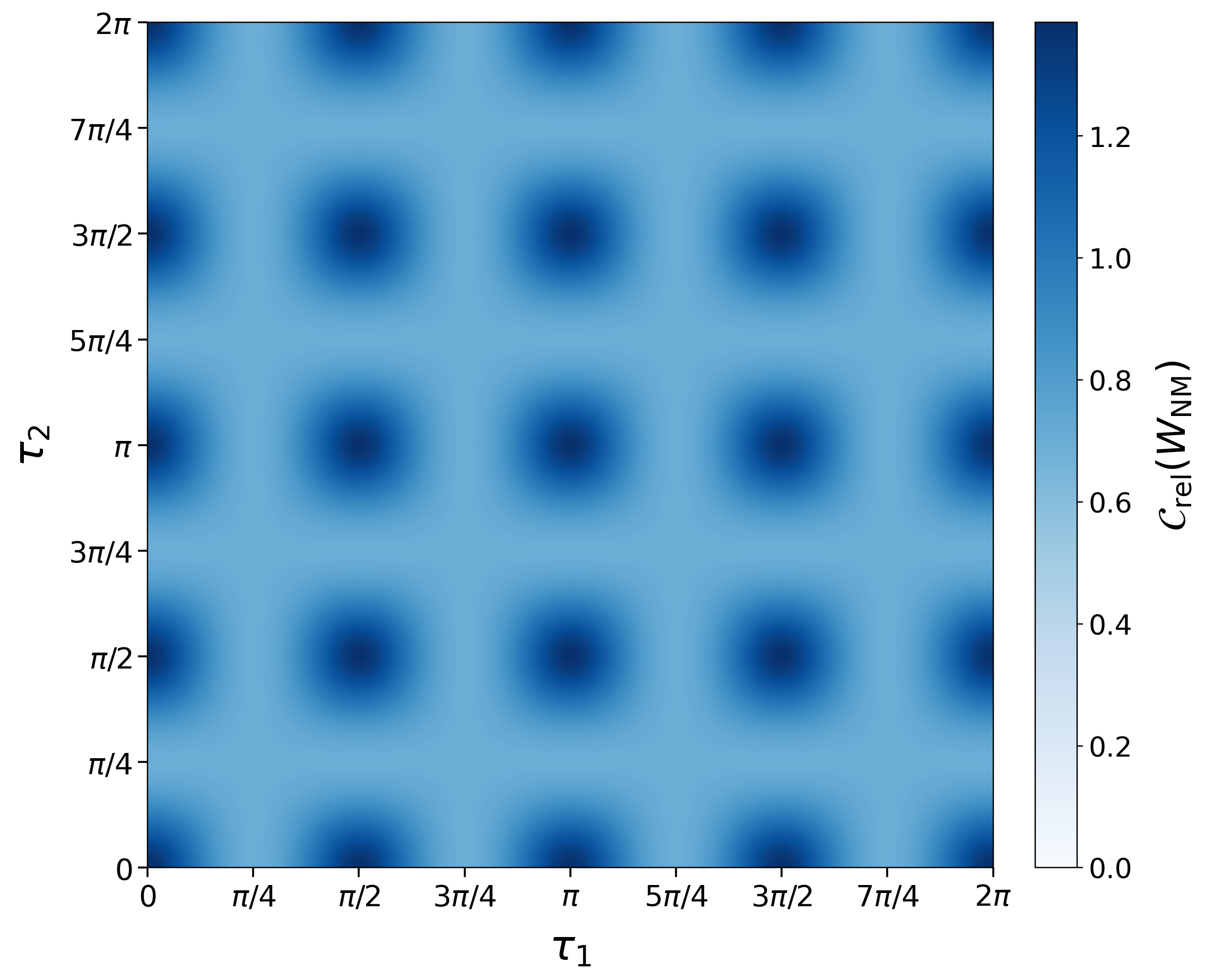}
        \put(1,78){(a)}
    \end{overpic}
    \vspace{0.5em}
    \begin{overpic}[width=\linewidth]{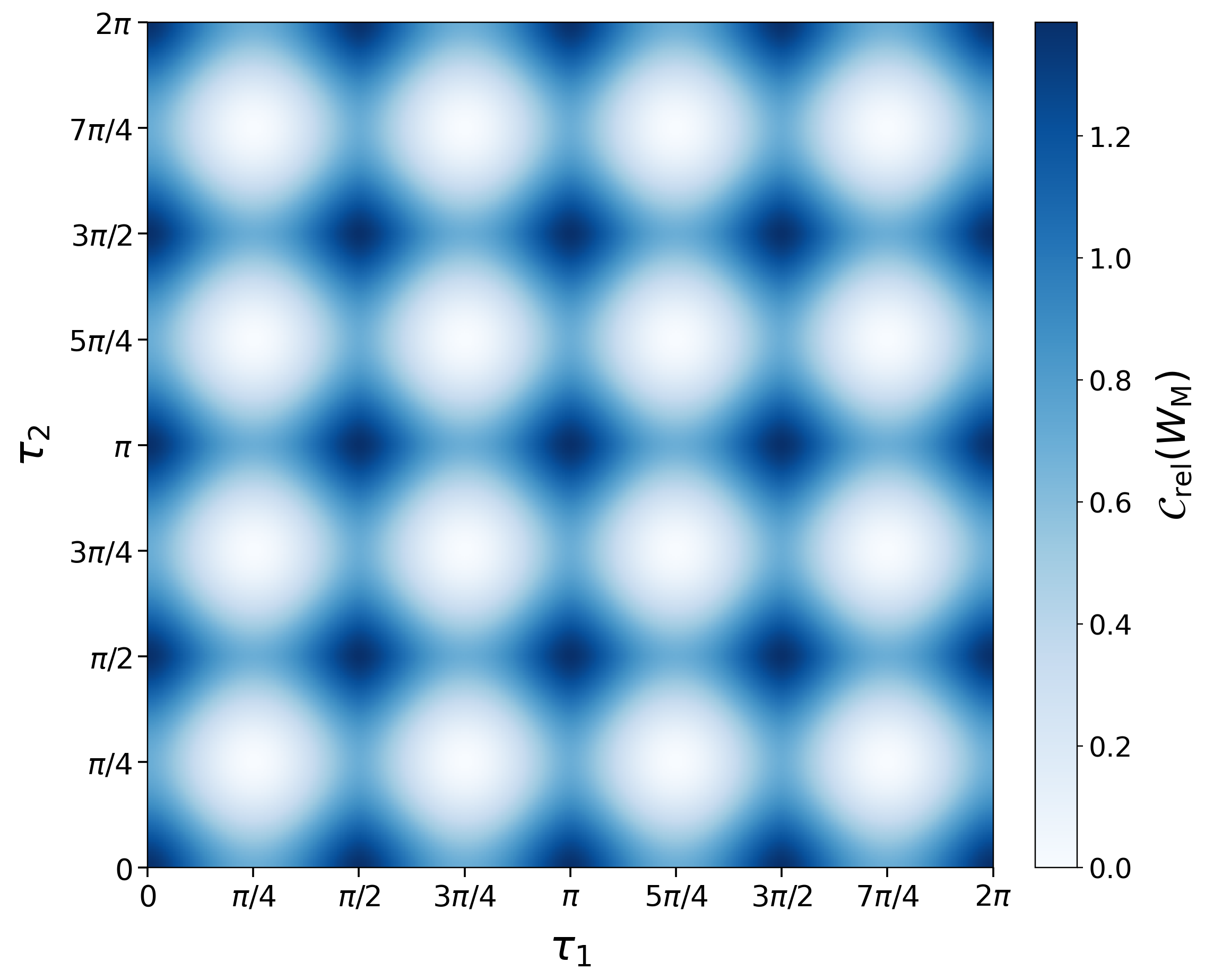}
        \put(1,78){(b)}
    \end{overpic}
    \caption{Relative entropy of coherence of the three-time Ising process with a single environmental qubit as a function of the two interval lengths: (a) the non-Markovian process, $\mathcal{C}_{\rel}(W_{\NM})$, retains at least $\ln 2$ of coherence; (b) the Markovian coherence, $\mathcal{C}_{\rel}(W_{\M})$, is additive over the two intervals and vanishes at $\tau_1=\frac\pi4=\tau_2$.}
    \label{fig:ising_multitime_coherence}
\end{figure}

\paragraph{Scaling to an arbitrary number of time intervals}
\label{sec:n-time-scaling}

Consider $N$ consecutive interaction intervals of lengths $\tau_1,\ldots,\tau_N$, while keeping the same environmental label $s=\pm1$ throughout the entire evolution. The corresponding multi-time process is
\begin{align}
    W_{\NM}^{(N)} = \frac12\sum_{s=\pm1} \bigotimes_{k=1}^{N} \PProj{U_s(\tau_k)}. \label{eq:N-time-process}
\end{align}
After normalization, this is an equal mixture of two pure process histories,
\begin{align}
    \widetilde W_{\NM}^{(N)} = \frac12 \left( \ketbra{\Psi_+}{\Psi_+} + \ketbra{\Psi_-}{\Psi_-} \right),
\end{align}
where $\ket{\Psi_s} = \bigotimes_{k=1}^{N} \frac{1}{\sqrt2} \left( \ee^{-\ii s\tau_k}\ket{00} + \ee^{+\ii s\tau_k}\ket{11} \right)$. The overlap between the two histories factorizes as
\begin{align}
    \braket{\Psi_+}{\Psi_-} = \prod_{k=1}^{N}\cos 2\tau_k,
\end{align}
and the only nonzero eigenvalues of $\widetilde W_{\NM}^{(N)}$ are
\begin{align}
    \lambda_\pm^{(N)} = \frac12 \left( 1\pm\prod_{k=1}^{N}\cos 2\tau_k \right). \label{eq:N-time-eigenvalues}
\end{align}

The process coherence takes the closed form
\begin{align}
    \mathcal C_{\rel}\!\left(W_{\NM}^{(N)}\right) = N\ln2 - h_2\!\left[ \frac12 \left( 1+\prod_{k=1}^{N}\cos 2\tau_k \right) \right], \label{eq:N-time-coherence}
\end{align}
that with $h_2(q)\leq\ln2$ yields 
\begin{align}
    \mathcal C_{\rel}\!\left(W_{\NM}^{(N)}\right) \geq (N-1)\ln2. \label{eq:N-time-lower-bound}
\end{align}
The minimum coherence of this family of multi-time processes grows linearly with the number of interaction intervals.

The separation from the fixed-origin channels is also linear in $N$. Let $\tau_1=\frac{\pi}{4}$, $\tau_k=\frac{\pi}{2}$ for $ k=2,\ldots,N$. The absolute observation times are then $T_j = \frac{\pi}{4} + (j-1)\frac{\pi}{2}$ for $j=1,\ldots,N$. For every such time, $\cos 2T_j=0$, so every fixed-origin reduced channel is completely dephasing,
\begin{align}
    \cE_{0\rightarrow T_j} = \Delta \; \Rightarrow \; C_{\rel}\!\left(\cE_{0\rightarrow T_j}\right)=0 \label{eq:N-time-fixed-origin}
\end{align}
However, because $\cos 2\tau_1=0$, Eq.~\eqref{eq:N-time-coherence} gives
\begin{align}
    \mathcal C_{\rel}\!\left(W_{\NM}^{(N)}\right) = (N-1)\ln2. \label{eq:N-time-separation}
\end{align}
An arbitrarily long sequence of fixed-origin observations can appear completely incoherent, while the underlying multi-time process contains an amount of coherence that grows linearly with the number of temporal intervals.

\section{The dipole-dipole model} \label{app:dipole-model}

As a physically-motivated model, consider the magnetic dipole-dipole interaction for an interspin axis chosen along $z$. Let both the system $S$ and the environment $E$ be qubits, initially uncorrelated, with the environment in the maximally mixed state $\sigma^E=\frac12\id_E$. We consider the interaction
\begin{align}
    H^{ES} = J\left(2Z^E Z^S-X^E X^S-Y^E Y^S \right), \label{eq:dipole-hamiltonian}
\end{align}
where $X,Y,Z$ are the Pauli matrices. In contrast to the Ising interaction, the environmental $Z$ eigenvalue is not conserved: the $XX+YY$ part exchanges an excitation between $S$ and $E$, so the environment serves as a quantum memory rather than a fixed classical label.

The propagator of Eq.~\eqref{eq:dipole-hamiltonian} equals
\begin{align}
    U(t) =& \ee^{-2\ii Jt} \left(\ketbra{00}{00}+\ketbra{11}{11}\right) \nonumber\\
    &+ \ee^{+2\ii Jt}\cos{2Jt} \left(\ketbra{01}{01}+\ketbra{10}{10} \right) \nonumber\\
    &+ \ii\ee^{+2\ii Jt}\sin{2Jt} \left( \ketbra{01}{10} + \ketbra{10}{01} \right).
    \label{eq:dipole-propagator}
\end{align}

\paragraph{Channels}
Tracing out the maximally mixed environment gives a unital Pauli channel. The reduced evolution is
\begin{align}
    \cE_t(\rho) = \frac12\Big[ &\id  + \lambda_\perp(t)  \left( \tr(X\rho) X+\tr(Y\rho)Y \right) \nonumber \\
    &+ \lambda_z(t)\tr(Z\rho)zZ \Big], \label{eq:dipole-reduced-channel}
\end{align}
where
\begin{align}
    \lambda_\perp(t) &= \cos(2Jt)\cos(4Jt), \label{eq:dipole-lambda-perp}\\
    \lambda_z(t) &= \cos^2(2Jt). \label{eq:dipole-lambda-z}
\end{align}
The normalized Choi state reads, in the computational basis
$\{\ket{00},\ket{01},\ket{10},\ket{11}\}$,
\begin{align}
    \widetilde J_{\cE_t} = \frac14
    \begin{pmatrix}
        1+\lambda_z & 0 & 0 & 2\lambda_\perp\\
        0 & 1-\lambda_z & 0 & 0\\
        0 & 0 & 1-\lambda_z & 0\\
        2\lambda_\perp & 0 & 0 & 1+\lambda_z
    \end{pmatrix},
    \label{eq:dipole-choi}
\end{align}
and the channel is incoherent in the $Z$ reference basis whenever
$\lambda_\perp(t)=0$. Its relative entropy of coherence is
\begin{align}
    C_{\rel}(\cE_t)  = \frac{1+\lambda_z(t)}{2} \left(  \ln2  - h_2\!\left[ \frac12+ \frac{|\lambda_\perp(t)|}{1+\lambda_z(t)} \right] \right).\label{eq:dipole-channel-coherence}
\end{align}
At $\tau=\frac{\pi}{8J}$,
\begin{align}
    C_{\rel}(\cE_\tau)=0, \label{eq:dipole-first-channel}
\end{align}
and $\cE_\tau$ maps every input state to a state diagonal in the $Z$ basis,
although it is not the pure dephasing map,
\begin{align}
    \cE_\tau(\rho) = \frac12\id + \frac14\Tr(Z\rho)Z. \label{eq:dipole-tau-channel}
\end{align}
After twice this interaction time, $\lambda_\perp(2\tau) = \lambda_z(2\tau) = 0$, and the fixed-origin channel becomes completely depolarizing,
\begin{align}
    \cE_{2\tau}(\rho) = \frac12\id, \qquad C_{\rel}(\cE_{2\tau})=0. \label{eq:dipole-double-channel}
\end{align}

Consider the three observation times $ T_0=0$,  $T_1=\tau$, $T_2=2\tau$. The channel from $T_0$ to $T_1$ is $\cE_\tau$, while in the absence of an intervention at $T_1$ the channel from $T_0$ to $T_2$ is $\cE_{2\tau}$. Both are incoherent.

For a non-Markovian process, a channel associated with the second interval is not defined independently of how the earlier part of the process is closed. We use the incoherent closure: prepare the maximally mixed state $\frac12\id$ at $A_O$ and discard the system at $B_I$. Since
\begin{align}
    U(\tau) \left( \frac12\id_S\otimes\frac12\id_E \right) U^\dagger(\tau) = \frac12\id_S\otimes\frac12\id_E,
\end{align}
the environment entering the second interval is again maximally mixed and the induced channel from $T_1$ to $T_2$ is also the incoherent $\cE_\tau$. 

The relative entropy of coherence vanishes for all three channels $\cE_{0\to T_1}$, $\cE_{0\to T_2}$ and $\cE_{T_1\to T_2}$.

\paragraph{The three-time process}

Let $\tau_1$ and $\tau_2$ denote two consecutive interaction intervals. Define
operators acting on $S$ by
\begin{align}
    M_{mn}(t) \coloneqq {}_E\!\bra{m}U(t)\ket{n}_E.
    \label{eq:dipole-M-definition}
\end{align}
Equation~\eqref{eq:dipole-propagator} gives
\begin{align}
    M_{00}(t) &= \ee^{-2\ii Jt}\ketbra{0}{0} + \ee^{+2\ii Jt}\cos2Jt\,\ketbra{1}{1}, \nonumber\\
    M_{11}(t) &= \ee^{+2\ii Jt}\cos 2Jt\,\ketbra{0}{0} + \ee^{-2\ii Jt}\ketbra{1}{1}, \nonumber\\
    M_{01}(t) &= \ii\ee^{+2\ii Jt}\sin 2Jt\,\ketbra{1}{0}, \nonumber\\
    M_{10}(t) &= \ii\ee^{+2\ii Jt}\sin 2Jt\,\ketbra{0}{1}.
    \label{eq:dipole-M-operators}
\end{align}
Define the two-interval history vectors
\begin{align}
    \kket{\cM_{mn}} = \sum_{k=0}^{1} \kket{M_{kn}(\tau_1)}^{A_OB_I} \otimes \kket{M_{mk}(\tau_2)}^{B_OC_I}, \label{eq:dipole-history-vectors}
\end{align}
where $\kket{M} = (\id\otimes M) \left(\ket{00}+\ket{11}\right)$. The process matrix is then
\begin{align}
    W_{\NM}^{A_OB_IB_OC_I} = \frac12 \sum_{m,n=0}^{1} \kket{\cM_{mn}} \bbra{\cM_{mn}}.
    \label{eq:dipole-process-general}
\end{align}
The sum over the intermediate environmental state $k$ in
Eq.~\eqref{eq:dipole-history-vectors} is coherent. This is the structural
difference from a process generated by a conserved classical environmental
label, for which the different memory histories appear only as a classical
mixture.

For equal intervals, $\tau_1=\tau_2=\tau=\frac{\pi}{8J}$. The normalized process state is $\omega_{\NM}=\frac{W_{\NM}}4$ and the relative entropy of process coherence equals
\begin{align}
    \mathcal C_{\rel}(W_{\NM}) = \frac14\ln2 + \frac38\ln3 + \frac5{16}\ln5.
    \label{eq:dipole-process-coherence}
\end{align}

For the three pairwise characterizations specified above, the channel coherence vanishes, while the complete three-time process remains coherent.

\section{Experimental implementation and tomography of the two-memory Ising model}
\label{app:ising2-experiment}

We implemented the two-memory Ising model of Sec.~\ref{sec:ising-model-2} numerically and on IBM hardware. The Hamiltonian, Eq.~\eqref{eq:ising2-hamiltonian} is simulated for $\tau=\frac{\pi}{8g}$ using
\begin{align}
    U_\tau = R_{ZZ}^{SE_1}\!\left(\frac{\pi}{2}\right) R_{ZZ}^{SE_2}\!\left(\frac{\pi}{4}\right).
\end{align}
The maximally mixed state of the two memory qubits is implemented as the equal classical mixture $\frac14\id_{E_1E_2}$. Each observable is evaluated for the four computational-basis preparations of $E_1E_2$ and then averaged. Within a single circuit the memory qubits are not reset between the two interaction intervals.

\paragraph{Ancilla-assisted channel measurements}

For each reduced channel a Bell pair $\ket{\Phi^+} = \frac{1}{\sqrt2} \left(\ket{00}+\ket{11}\right)$ is prepared between the system and a reference qubit. Acting with the channel on the system half produces its normalized Choi state. For the pure-dephasing family considered here,
\begin{align}
    \widetilde J_\cE = \frac12 \begin{pmatrix}
        1&\gamma\\
        \gamma^*&1
    \end{pmatrix}
\end{align}
on its nonzero support, so that measurements in Eq.~\eqref{eq:ising2-channel-diagnostic} are performed for the channel-coherence diagnostic.

For the second interval, the first system output is discarded and a fresh half of a Bell pair is injected through a SWAP operation. Since the memory populations are unchanged by the first interaction, this realizes the same channel $\cE_\tau$ independently of the initial system preparation.

\paragraph{Ancilla-assisted process measurement}

The three-time Choi state is generated coherently using two Bell pairs. Let $R_1$ be the reference associated with the first temporal input and let $R_2$ be the reference associated with the second. A further qubit $Q$ stores the output of the first interval. The circuit implements
\begin{align}
    \ket{\Phi^+}_{R_1S}
    \xrightarrow{\,U_\tau\,}
    \operatorname{SWAP}_{SQ}
    \xrightarrow{\,U_\tau\,}||,
\end{align}
where before the SWAP the pair $\ket{\Phi^+}_{R_2Q}$ is prepared. After the SWAP, the four qubits $(R_1,Q,R_2,S)$ represent the process legs $(A_O,B_I,B_O,C_I)$, respectively. Measuring all four in the $X$ basis yields the witness in Eqs.~\eqref{eq:ising2-witness}-\eqref{eq:ising2-witness-value}.

The targeted measurements used $1024$ shots for each environmental branch and measurement setting. The Qiskit implementation was developed with assistance from ChatGPT (5.6 Sol) and verified by the authors. We first tested the circuits on the noisy \texttt{FakeFez} backend and then ran them on \texttt{ibm\_fez} quantum processor. The results are summarized in Table~\ref{tab:ising2-results}.

The witness reaches $94\%$ of its ideal value in the noisy simulation and $70\%$ on hardware; for the reconstructed process coherence the corresponding values are $95\%$ and $78\%$. In both cases, the process coherence exceeds the residual coherence of any individual channel by more than an order of magnitude.

\begin{table}[t]
\centering
\caption{Comparison of the ideal predictions, noisy IBM simulation, and results obtained on the \texttt{ibm\_fez} quantum processor.}
\label{tab:ising2-results}
\begin{tabular}{l|c|c|c}
\hline\hline
Quantity & Ideal & FakeFez & \texttt{ibm\_fez} \\
\hline
$|\gamma_{01}|$ & $0$ & $0.013 \pm 0.011$ & $0.018 \pm 0.012$ \\
$|\gamma_{12}|$ & $0$ & $0.015 \pm 0.012$ & $0.045 \pm 0.012$ \\
$|\gamma_{02}|$ & $0$ & $0.014 \pm 0.013$ & $0.054 \pm 0.015$ \\
$|\sigma|$ & $0.50$ & $0.47 \pm 0.014$ & $0.35 \pm 0.015$ \\
\hline
$C_{\rel}(\cE_{01})$ & $0$ & $0.010$ & $0.018$ \\
$C_{\rel}(\cE_{12})$ & $0$ & $0.012$ & $0.013$ \\
$C_{\rel}(\cE_{02})$ & $0$ & $0.0015$ & $0.022$ \\
$\mathcal C_{\rel}(W)$ & $0.35$ & $0.33$ & $0.27$ \\
\hline\hline
\end{tabular}
\end{table}

\begin{figure}[t]
    \centering
    \includegraphics[width=\linewidth]{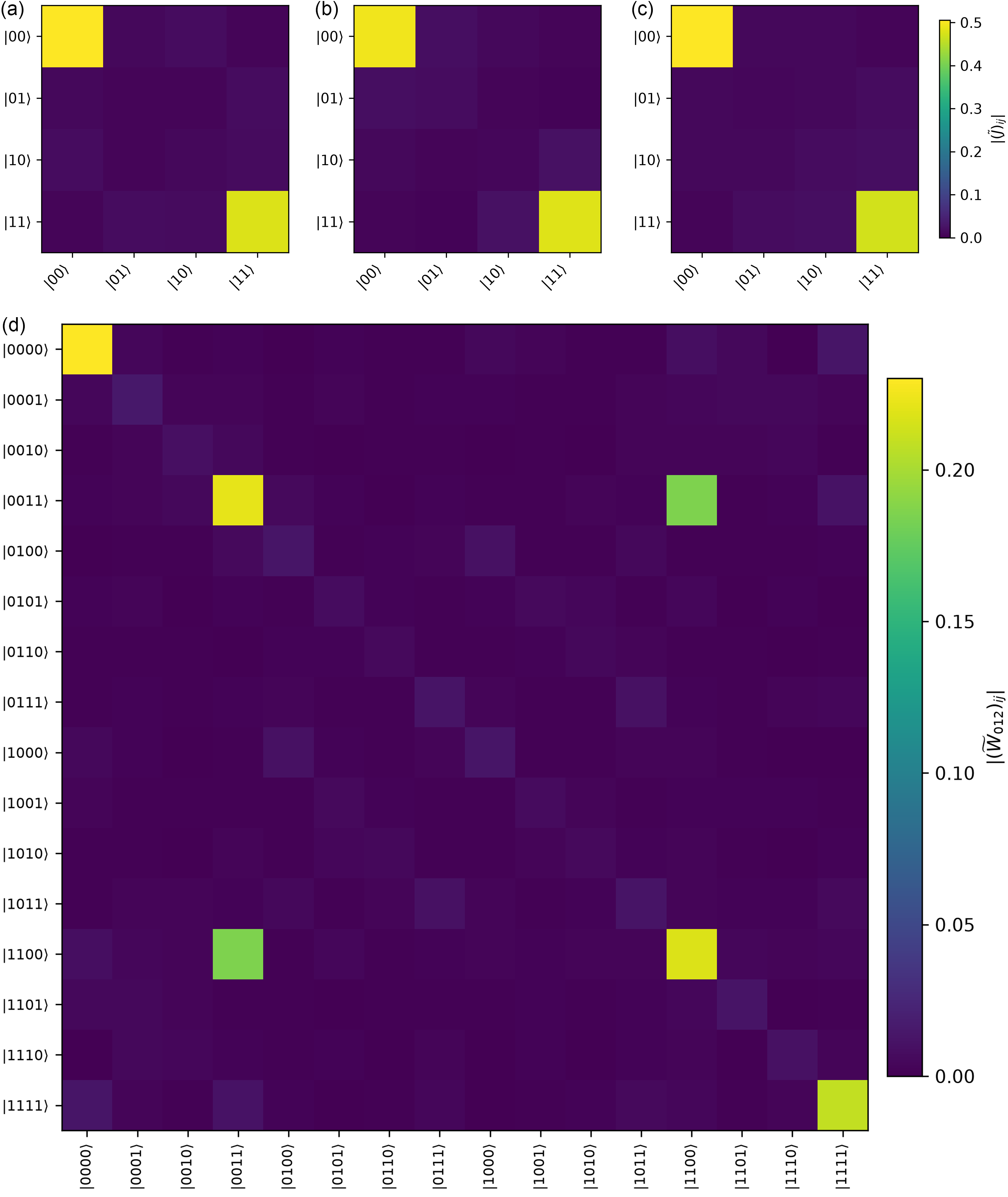}
    \caption{Tomographic reconstruction of the reduced channels and the full three-time process on \texttt{ibm\_fez}. Panels (a)--(c) show the absolute values of the normalized Choi matrices of the three pairwise channels, $\cE_{T_0\rightarrow T_1}$, $\cE_{T_1\rightarrow T_2}$, and $\cE_{T_0\rightarrow T_2}$, respectively, which remain close to diagonal in the reference basis. Panel (d) shows the absolute values of the normalized process Choi matrix, exhibiting sizeable off-diagonal elements associated with multi-time coherence.}
    \label{fig:tomo}
\end{figure}

\paragraph{Full tomography}

For the channel Choi states we perform local Pauli tomography using all $3^2=9$ settings in the $X$, $Y$, and $Z$ bases. For the four-qubit process Choi state we analogously use $3^4=81$ local Pauli settings. Each setting is evaluated for all four environmental branches.

The density operators are reconstructed by linear inversion. Finite-shot statistics and noise can produce small negative eigenvalues; before evaluating the von Neumann entropy we set negative eigenvalues to zero and renormalize. The reported tomographic values are visualized in Fig~\ref{fig:tomo}. We do not report uncertainties for the relative entropies because each tomographic reconstruction was performed only once. A practical advantage of the framework is that process coherence can be certified with a witness, without full process tomography.

\end{document}